\newcommand{\RR}{\mathbb{R} }
\newcommand{\NN}{\mathbb{N} }
\newcommand{\PP}{\mathbb{P} }
\newcommand{\EE}{\mathbb{E} }
\newcommand{\blind}{0}
\newtheorem{theorem}{Theorem}[section]
\begin{document}

\def\spacingset#1{\renewcommand{\baselinestretch}%
{#1}\small\normalsize} \spacingset{1}


\if0\blind
{
  \title{\bf Detecting Structural Shifts and Estimating Change-Points in Interval-Based Time Series}
  \author{Li-Hsien Sun\\
    Graduate Institute of Statistics, National Central University, Taiwan \\
    Zong-Yuan Huang \\
    Graduate Institute of Statistics, National Central University, Taiwan\\
Chi-Yang Chiu\\
Division of Biostatistics, Department of Preventive Medicine,\\ University of Tennessee Health Science Center, Memphis, Tennessee, USA\\
Ning Ning\thanks{
email: \texttt{patning@tamu.edu}}\\
Department of Statistics, Texas A\&M University, Texas, USA}
\date{} 

  \maketitle
} \fi

\if1\blind
{
  \bigskip
  \bigskip
  \bigskip
  \begin{center}
    {\LARGE\bf Title}
\end{center}
  \medskip
} \fi

\bigskip
\begin{abstract}
This paper addresses the open problem of conducting change-point analysis for interval-valued time series data using the maximum likelihood estimation (MLE) framework. Motivated by financial time series, we analyze data that includes daily opening (O), up (U), low (L), and closing (C) values, rather than just a closing value as traditionally used. To tackle this, we propose a fundamental model based on stochastic differential equations, which also serves as a transformation of other widely used models, such as the log-transformed geometric Brownian motion  model. We derive the joint distribution for these interval-valued observations using the reflection principle and Girsanov’s theorem. The MLE is obtained by optimizing the log-likelihood function through first and second-order derivative calculations, utilizing the Newton-Raphson algorithm. We further propose a novel parametric bootstrap method to compute confidence intervals, addressing challenges related to temporal dependency and interval-based data relationships. The performance of the model is evaluated through extensive simulations and real data analysis using S\&P500 returns during the 2022 Russo-Ukrainian War. The results demonstrate that the proposed OULC model consistently outperforms the traditional OC model, offering more accurate and reliable change-point detection and parameter estimates.
\end{abstract}

\noindent%
{\it Keywords:}  Change-point estimation; Interval-valued time series; Stochastic differential equation; Parametric bootstrap
\vfill


\spacingset{1.75} 
\section{Introduction}
\label{sec:intro}

Change-point analysis plays a vital role in time series analysis, as it allows for the identification of points where key statistical properties, such as the mean, variance, or underlying distribution, experience significant shifts. Recognizing these changes is essential in fields like finance, climatology, and medicine, where structural shifts may indicate critical events like market fluctuations or disease outbreaks. By detecting these changes, analysts can make more informed decisions, enhance predictive modeling, and improve response strategies. The maximum likelihood estimation (MLE) method is highly favored for change-point analysis due to its efficiency and flexibility in parameter estimation. MLE provides consistent and asymptotically normal estimates, making it particularly effective for large sample sizes, which are common in time series data. Furthermore, MLE is adaptable to a wide range of probabilistic models, allowing for a general framework that can be applied across different distributions and model structures. Its robustness and strong theoretical foundations have made it a preferred method for practitioners in change-point detection.

Various studies have advanced change-point analysis using MLE. For example, \cite{PP2005} developed an estimator for the change-point that aligns with the last zero of the binomial CUSUM chart, initially proposed by  \cite{Page1954}. \cite{PS2000,SPC1998} applied MLE to this last zero estimator, further enhancing its utility. \cite{DW2016} explored change detection in multivariate distributions.  \cite{EH2016} combined MLE with the last zero estimator to estimate change points in binomial time series, while \cite{TP2003} extended the MLE approach to the first-order autoregressive model. \cite{ELS2021}  used the Newton-Raphson (NR) algorithm to compute MLEs for change points in binomial time series, demonstrating the method's versatility and reliability in time series analysis.

In practice, time series data can be more comprehensive, especially in financial contexts where multiple observations are recorded daily. For instance, traditional models often focus solely on the daily closing price in financial time series analysis, neglecting other crucial data points such as the daily maximum, minimum, and opening prices, even though these values are typically available.   \cite{BD2003,BD2006} examine methods for evaluating the mean, variance, and covariance, alongside regression analysis using interval-valued observations. In terms of interval time series, a basic approach involves modeling the maximum and minimum values through a vector autoregressive (VAR) model. However, this method can produce unreasonable predictions, such as forecasting a minimum value larger than the maximum. To address this, \cite{ND2007} propose focusing on the center and radius processes rather than the raw minimum and maximum values.

 The VAR model for the first-order difference of the center and radius processes is further discussed in \cite{AGM2011}. 
 \cite{RS2011} extended this work by proposing the center-radius self-exciting threshold autoregressive (CR-SETAR) model. Other relevant research on interval time series includes the works of \cite{GBCC2007,BCCG2008,GLM2015}. Additionally,  \cite{TB2015} introduced the space-time autoregressive (STAR) model, which ensures that the predicted maximum value remains larger than the minimum through parameter constraints. Beyond modeling the raw prices, the range, defined as the difference between logarithmic maximum and minimum prices, has also been studied.  \cite{Chou2005,Chou2006} explored range models driven by geometric Brownian motion (GBM) with stochastic volatility, with estimates obtained via quasi-maximum likelihood.  \cite{CGL2008} proposed a threshold heteroskedastic model for the range driven by the Weibull distribution. More recently,  \cite{LCL2021} analyzed symbolic interval-valued data using auto-interval-regression models.

It remains an open problem how to apply MLE for change-point analysis in interval-valued time series data, and this paper aims to address that challenge. Our first step is to develop a general but fundamental model, which also serves as a transformation of other models, such as the log-transformed GBM model that is widely used in financial modeling. Specifically, we utilize the stochastic differential equation (SDE) in Equation \eqref{eqn:SDE} to define the OULC model, which operates on an interval-based time series. A key difficulty lies in deriving the joint likelihood for this model. By employing the reflection principle and the Girsanov theorem, as described in \cite{Shreve2004} and Equation 1.15.8 in \cite{BS2002}, and further referenced in \cite{CR2013}, we derive the joint distribution for a random vector that includes daily maximum, minimum, opening, and closing values in Theorem \ref{thm:main}. The MLE is then obtained by applying first and second-order derivative optimization rules, after careful calculation of these derivatives and utilizing the NR algorithm. 

Two significant challenges arise when calculating the confidence intervals (CIs) for interval-based time series data: temporal dependence and the intrinsic relationship between interval data points. To overcome these, we propose an innovative method for CI calculation based on the parametric bootstrap approach \citep{ET1993}. We conduct an extensive numerical analysis, simulating various change-point scenarios and performing 1000 replications of each experiment. Furthermore, we apply the proposed method to real stock return data from the S\&P 500 index during the 2022 Russo-Ukrainian War. Both the simulation and empirical data analyses consistently demonstrate that the proposed OULC model outperforms the traditional OC model across all scenarios. By incorporating not only the opening and closing values but also the maximum and minimum, the OULC model offers enhanced parameter and change point estimation, leading to greater accuracy and certainty in results.

The remainder of the paper is organized as follows. Section \ref{MR} provides a comprehensive description of the proposed model and methodology. In Section \ref{SS}, the performance of the method is demonstrated through simulation studies. Section \ref{sec:ES} presents an application of the method by analyzing daily returns of the S\&P500 during the Russo-Ukrainian War in 2022. Finally, conclusions are drawn in Section \ref{CON}. The Supplement includes additional numerical experiments, detailed derivations of the first and second-order derivatives of the log-likelihood function, and tables presenting comprehensive numerical results.

\section{Main result}\label{MR}
In this section, we propose the OULC model in Section \ref{PM}, describe the likelihood function in Section \ref{Likelihood}, detail the maximum likelihood estimation in Section \ref{MLE}, and present the confidence interval construction in Section \ref{sec:CI}.

\subsection{The OULC model}\label{PM}
We propose the OULC model, which operates on an interval-based time series on the $t_i$-th day as $X_i=(O_i,U_i,L_i,C_i)$ for $i\in \{1,\ldots,n\}$ with $n$ being the termial time. 
Here, $O_i$ stands for the opening (starting) value $Y(t_{i-1})$, $U_i$ stands for the upper (highest) value $U_i=\mbox{max}_{t_{i-1}\leq t\leq t_i}Y(t)$, $L_i$ stands for the lowest value $L_i=\mbox{min}_{t_{i-1}\leq t\leq t_i}Y(t)$, and $C_i$ stands for the closing value $Y(t_{i})$. We assume that $Y(t)$ is driven by a SDE in the basic form
\begin{eqnarray}
	\label{eqn:SDE}
	Y(t_i)=Y(t_{i-1})+\int_{t_{i-1}}^{t_{i}}\mu dt+\int_{t_{i-1}}^{t_{i}}\sigma dW(t),
	\label{eq:model.first}
\end{eqnarray}
for $t_{i-1}\leq t\leq t_i$, 
where $W(t)$ is a standard Brownian motion, $\mu$ is the drift parameter, and $\sigma^2$ is the volatility parameter. 

The proposed OULC model has a natural interpretation in financial time series. Google Finance typically provides a comprehensive set of financial data for individual stocks. Here is a breakdown of the common financial data available: opening price (the price at which a stock starts trading when the market opens on a particular day), highest price (the maximum price at which the stock traded during the day), lowest price (the minimum price at which the stock traded during the day), and closing price (the last price at which the stock traded during the regular trading hours on a particular day). SDE \eqref{eq:model.first} and its variants are widely used to model stock prices and other financial instruments due to their ability to capture the continuous and random nature of market movements. Now we further illustrate how SDE \eqref{eq:model.first} relates to a GBM which is an important example used in mathematical finance to model stock prices. Consider the intra-daily stock price satisfying the GBM
$$
S(t_{i})=S(t_{i-1})+\int_{t_{i-1}}^{t_{i}}\mu' S(t) dt+\int_{t_{i-1}}^{t_{i}}\sigma S(t)dW(t).
$$  
The intra-daily log price $Y(t)=\log(S(t))$, which is known as stock return, evolves according to SDE \eqref{eq:model.first} with $\mu=\mu'-\frac{\sigma^2}{2}$. We refer to \cite{AMZ2005,ABDE2001} for further details.

\subsection{Likelihood function}\label{Likelihood}
We aim to obtain the likelihood function for the OULC model and its corresponding profile MLE in this subsection. First, let us recall the two-stage change-point problem formulation \citep{ELS2021,SPC1998}. Let \{$X_i:i = 1,2,\ldots,n$\} be a time series where $n$ denotes the terminal time. Given the parameter space 
\begin{align*}
		\Theta = \Big\{(\gamma_0,\gamma_1, \tau)\,\big|\,\gamma_0 \in \RR^{n_0},\gamma_1\in\RR^{n_1}, \tau \in \NN\Big\},
\end{align*}
where $\gamma_0$ and $\gamma_1$ are parameters for two general parametric marginal density functions $f_{\gamma_0}$ and $f_{\gamma_1}$, respectively, in a two-stage data structure:
\begin{align*}
	\begin{gathered}
		X_1, X_2, \ldots, X_{\tau} \sim f_{\gamma_0} \longrightarrow \mbox{data before structure change} \\  
		X_{\tau+1}, X_{\tau+2}, \ldots, X_n \sim f_{\gamma_1} \longrightarrow \mbox{data after structure change}
	\end{gathered}
\end{align*}
with $\tau$ representing the change-point. The general form of the log-likelihood for the two-stage data structure is expressed as
\begin{align}
	&\ell(\gamma_0, \gamma_1, \tau) = \sum_{i=1}^{\tau} \log(f_{\gamma_0}(x_i)) + \sum_{i=\tau+1}^n \log(f_{\gamma_1}(x_i)). 
	\label{general_log-likelihood}
\end{align}

Now, for the proposed OULC model, the parameter space  is defined as 
\begin{align*}
	\begin{gathered}
		\Theta = \Big\{(\mu_0,\mu_1,\sigma^2_0,\sigma^2_1,\tau)\,\Big|\,\mu_0,\mu_1 \in (-\infty,\infty), \,\sigma_0,\sigma_1 \in (0,\infty), \,
		\tau \in \{3,4,\ldots,n-3\}\Big\}.
	\end{gathered}
\end{align*}
That is, to avoid non-identiability, we assume that there are at least three time points before and after the change point. 
We then construct the log-likelihood function for the OULC model. 

\begin{theorem}
	\label{thm:main}
Given data $\Vec{o}=(o_1,\cdots,o_n)$, $\Vec{u}=(u_1,\cdots,u_n)$, $\Vec{l}=(l_1,\cdots,l_n)$, and $\Vec{c}=(c_1,\cdots,c_n)$,
the log-likelihood with one change point $\tau$ is written as
\begin{align}
	&\ell (\boldsymbol{\theta}) = \sum_{i=1}^{\tau} \log(f_{\mu_0,\sigma^2_0}(u_i,l_i,c_i\,|\,o_i)) + \sum_{i=\tau+1}^n \log(f_{\mu_1,\sigma^2_1}(u_i,l_i,c_i\,|\,o_i)), 
	\end{align}
where $\boldsymbol{\theta} = \{\mu_0, \mu_1, \sigma^2_0, \sigma^2_1,  \tau \}\in \Theta$. 
The functions $f_{\mu_0,\sigma^2_0}$ and $f_{\mu_1,\sigma^2_1}$ are obtained by substituting $(\mu,\sigma^2)$ in $f_{\mu,\sigma^2}$ with $(\mu_0, \sigma^2_0)$ and $(\mu_1, \sigma^2_1)$, respectively.
Here, $f_{\mu,\sigma^2}$ is given by
\begin{align}
		f_{\mu,\sigma^2}(u_i,l_i,c_i\,|\,o_i)=&\sum_{k=-\infty}^{\infty} g^{(1)}_{\sigma^2}(k,u_i,l_i,c_i\,|\,o_i) h^{(1)}_{\mu,\sigma^2}(k,u_i,l_i,c_i\,|\,o_i)\label{f_4}\\
		&
		-\sum_{k=-\infty}^{\infty} g^{(2)}_{\sigma^2}(k,u_i,l_i,c_i\,|\,o_i)h^{(2)}_{\mu,\sigma^2}(k,u_i,l_i,c_i\,|\,o_i),	\nonumber	
	\end{align}
	where 
	\begin{eqnarray}
\nonumber		g^{(1)}_{\sigma^2}(k,u_i,l_i,c_i\,|\,o_i) &=&\frac{4k(k+1)}{\sqrt{2\pi} \sigma^3} \left(1-\frac{(c_i+o_i-2u_i-2k(u_i-l_i))^2}{\sigma^2} \right),\\
	g^{(2)}_{\sigma^2}(k,u_i,l_i,c_i\,|\,o_i)&=&\frac{4k^2}{\sqrt{2\pi} \sigma^3} \left(1-\frac{(c_i-o_i-2k(u_i-l_i))^2}{\sigma^2} \right),	\label{f_5}\\
\nonumber		h^{(1)}_{\mu,\sigma^2}(k,u_i,l_i,c_i\,|\,o_i) &=& \exp\left\{ -\frac{(c_i+o_i-2u_i-2k(u_i-l_i))^2}{2\sigma^2} -\frac{\mu^2}{2\sigma^2}+\frac{\mu(c_i-o_i)}{\sigma^2} \right\}, \\
		h^{(2)}_{\mu,\sigma^2}(k,u_i,l_i,c_i\,|\,o_i) &=& \exp\left\{ -\frac{(c_i-o_i-2k(u_i-l_i))^2}{2\sigma^2} -\frac{\mu^2}{2\sigma^2}+\frac{\mu(c_i-o_i)}{\sigma^2} \right\}.\nonumber	
	\end{eqnarray}
\end{theorem}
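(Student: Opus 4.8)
The plan is to derive the driftless joint law of the triple (running maximum, running minimum, terminal value) by the method of images, and then reinstate the drift through Girsanov's theorem. Because the argument is identical on every interval $[t_{i-1},t_i]$, I fix one interval, take its length to be unit (equivalently, absorb it into $\sigma^2$), and write $o,u,l,c$ for the four observed values. First I would center the process by setting $\tilde Y(t)=Y(t)-o$, so that $\tilde Y$ starts at the origin and the target becomes the joint density of $(u-o,\,l-o,\,c-o)$ for a Brownian motion with drift $\mu$ and diffusion $\sigma$. I would then invoke Girsanov's theorem to pass to the measure $\PP^0$ under which $\tilde Y=\sigma W$ is driftless. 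The key observation, and the reason the drift decouples cleanly, is that the Radon--Nikodym derivative
\[
\frac{d\PP^{\mu}}{d\PP^{0}}\Big|_{\mathcal F_{t_i}}
=\exp\!\Big\{\tfrac{\mu}{\sigma^2}(c-o)-\tfrac{\mu^2}{2\sigma^2}\Big\}
\]
depends on the path only through its endpoint $c-o$; hence it factors out of any computation involving the extrema and is exactly the common factor $\exp\{-\mu^2/(2\sigma^2)+\mu(c-o)/\sigma^2\}$ appearing in both $h^{(1)}$ and $h^{(2)}$.

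Next I would compute, for the driftless scaled Brownian motion started at $0$, the density $p(x;\ell,h)$ of the event $\{\min_{[0,1]}\tilde Y>\ell,\ \max_{[0,1]}\tilde Y<h,\ \tilde Y(1)\in dx\}$ of the path staying in the strip $(\ell,h)$ and ending near $x$, where $\ell<0<h$ and $\ell<x<h$. Applying the reflection principle at the two barriers $\ell$ and $h$ — equivalently, the method of images for the heat equation on an interval with absorption at both ends (Equation 1.15.8 in \cite{BS2002}, cf. \cite{Shreve2004,CR2013}) — gives the alternating image series
\[
p(x;\ell,h)=\frac{1}{\sqrt{2\pi}\,\sigma}\sum_{k=-\infty}^{\infty}
\Big[e^{-(x-2k(h-\ell))^2/2\sigma^2}-e^{-(x-2h-2k(h-\ell))^2/2\sigma^2}\Big],
\]
with the direct family carrying the $+$ sign and the image family the $-$ sign.

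Then, since the event is $\{\max<h\}\cap\{\min>\ell\}$, the joint density of the triple is recovered by the mixed partial derivative
\[
f^{0}_{\sigma^2}(u,l,c\,|\,o)=-\frac{\partial^2}{\partial h\,\partial \ell}\,p(x;\ell,h)
\Big|_{x=c-o,\ h=u-o,\ \ell=l-o}.
\]
Differentiating each exponential once in $h$ and once in $\ell$ brings down the linear coefficients of the arguments: for the direct family (argument $x-2k(h-\ell)$) these are $(-2k)(2k)$, and for the image family (argument $x-2h-2k(h-\ell)$) they are $\big(-2(k+1)\big)(2k)$, producing the prefactors $4k^2$ and $4k(k+1)$ respectively. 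The second derivative of a Gaussian simultaneously generates the factor $\big(1-(\cdot)^2/\sigma^2\big)$ together with the extra $\sigma^{-2}$ that turns the normalizer $(\sqrt{2\pi}\,\sigma)^{-1}$ into $(\sqrt{2\pi}\,\sigma^{3})^{-1}$. Tracking the overall sign of $-\partial^2/\partial h\partial\ell$ and re-expressing the arguments in the original coordinates (via $c+o-2u=(c-o)-2(u-o)$ and $u-l=h-\ell$) yields exactly $g^{(1)}_{\sigma^2},g^{(2)}_{\sigma^2}$ and their associated exponentials. Finally I would multiply $f^{0}_{\sigma^2}$ by the Girsanov factor; because it depends only on $c-o$ it passes unchanged through the differentiation in $\ell,h$ and attaches to every term, completing $h^{(1)}_{\mu,\sigma^2},h^{(2)}_{\mu,\sigma^2}$ and hence formula \eqref{f_4}. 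Summing the per-interval log-densities over $i\le\tau$ with $(\mu_0,\sigma_0^2)$ and over $i>\tau$ with $(\mu_1,\sigma_1^2)$, as in the two-stage form \eqref{general_log-likelihood}, gives the stated $\ell(\boldsymbol\theta)$.

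The hard part will be the middle two steps: setting up the correct double-barrier image series and, above all, extracting the joint law of \emph{both} extrema through the mixed second derivative. A single reflection handles one barrier, but two absorbing barriers force the full doubly-infinite image sum, and the sign bookkeeping across the two image families — together with justifying that the differentiated series still converges and may be differentiated term by term — is where the delicate work lies. By contrast, the Girsanov step is routine once one notices that the likelihood ratio is a function of the endpoint alone.
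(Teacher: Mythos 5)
Your proposal follows essentially the same route as the paper's proof: the double-barrier image series for the driftless strip probability (Equation 1.15.8 of Borodin--Salminen), Girsanov's theorem with the key observation that the likelihood ratio depends on the path only through its endpoint, and the mixed partial derivative in the two barriers, whose sign and coefficient bookkeeping you carry out correctly to produce the prefactors $4k(k+1)$ and $4k^2$ and the arguments $c_i+o_i-2u_i-2k(u_i-l_i)$ and $c_i-o_i-2k(u_i-l_i)$ in $g^{(1)}_{\sigma^2},g^{(2)}_{\sigma^2},h^{(1)}_{\mu,\sigma^2},h^{(2)}_{\mu,\sigma^2}$. The only step the paper spells out that you leave implicit is why the full likelihood factors into the per-interval conditional densities $f_{\mu,\sigma^2}(u_i,l_i,c_i\,|\,o_i)$, which the paper justifies via the Markov property of the process together with the assumption $o_i=c_{i-1}$; your concluding per-interval summation presupposes exactly this factorization.
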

\begin{proof}
Without loss of generality, the SDE \eqref{eq:model.first} can be written as
\begin{eqnarray*}
	Y(t)=Y(0)+\int_0^t\mu d\tilde{t}+\int_0^t\sigma dW(\tilde{t}). 
\end{eqnarray*}
Given $M(t)=\mbox{max}_{0\leq \tilde{t}\leq t}Y(\tilde{t})$ and $m(t)=\mbox{min}_{0\leq \tilde{t}\leq T}Y(\tilde{t})$, we have  
\begin{align*}
&\hspace{-0.5cm}\PP\left(a\leq m(t) \leq M(t) \leq b, Y(t)\in dy \,\Big|\,Y(0)=y_0\right)\\
=&\EE\Big[\mathbbm{1}_{\{a\leq m(t) \leq M(t) \leq b, Y(t)\in dy\}}\,\Big|\,Y(0)=y_0\Big]\\
=&\EE\Big[\mathbbm{1}_{\{\frac{a-y}{\sigma}\leq \frac{m(t)-y}{\sigma}\leq \frac{M(t)-y}{\sigma}\leq \frac{b-o}{\sigma}, \frac{Y(t)-y}{\sigma}\in dy\}}\,\Big|\,Y(0)=y_0\Big]\\
=&\frac{1}{\sqrt{2\pi t}}\exp\left\{   -\frac{\mu^2}{2\sigma^2}+\frac{\mu(y-y_0)}{\sigma^2} \right\}\sum_{k=-\infty}^\infty\exp\left\{ -\frac{(y-y_0-2k(b-a))^2}{2\sigma^2}  \right\}\\
&-\frac{1}{\sqrt{2\pi t}}\exp\left\{   -\frac{\mu^2}{2\sigma^2}+\frac{\mu(y-y_0)}{\sigma^2} \right\}\sum_{k=-\infty}^\infty\exp\left\{ -\frac{(y+y_0-2b-2k(b-a))^2}{2\sigma^2}  \right\}.
\end{align*}
The last equation follows from \cite{CR2013} and the Girsanov Theorem in \cite{Shreve2004}; see also Equation 1.15.8 on page 227 in \cite{BS2002}, the theorem on page 26 in \cite{Freedman1971}, and Proposition 8.10 in \cite{KS1998} for related results on the joint distribution of the maximum, minimum, and terminal values of standard Brownian motion given the initial value. By differentiating with respect to $a$ and $b$, we obtain
\begin{align}
\nonumber f_ {\mu,\sigma^2}(b,a,y\,|\,y_0)=&\sum_{k=-\infty}^{\infty} g^{(1)}_{\sigma^2}(k,b,a,y\,|\,y_0) h^{(1)}_{\mu,\sigma^2}(k,b,a,y\,|\,y_0)\\
		&
		-\sum_{k=-\infty}^{\infty} g^{(2)}_{\sigma^2}(k,b,a,y\,|\,y_0)h^{(2)}_{\mu,\sigma^2}(k,b,a,y\,|\,y_0),
		\label{OULC_density}
\end{align}
with $a\leq y_0,y\leq b$, where $g^{(i)}_{\sigma^2}$ and $h^{(i)}_{\mu,\sigma^2}$ for $i=1,2$ are given by equation \eqref{f_5}. 
Hence, given $\Vec{o}=(o_1,\cdots,o_n)$, $\Vec{u}=(u_1,\cdots,u_n)$, $\Vec{l}=(l_1,\cdots,l_n)$, and $\Vec{c}=(c_1,\cdots,c_n)$, based on equation \eqref{OULC_density}, the joint distribution of $U,L,C$ given $O$ denoted as $f_{\mu,\sigma^2}(u_i,l_i,c_i\,|\,o_i)$ is given by equation \eqref{f_4}.

Furthermore, the corresponding likelihood function without structure change is  
\begin{align*}
		&\displaystyle L(\mu,\sigma^2\,|\,\Vec{u},\Vec{l},\Vec{o},\Vec{c})\\
		&=
		  f_{\mu,\sigma^2}(u_n,l_n,c_n,o_n\,|\,u_{n-1},l_{n-1},c_{n-1},o_{n-1},\cdots,u_{1},l_{1},c_{1},o_{1})\times\cdots\times f_{\mu,\sigma^2}(u_{1},l_{1},c_{1},o_{1}),
	\end{align*}
	with $l_i \leq c_i,o_i \leq u_i$, for $i=1,\cdots,n$. 
Applying the Markovian property of the logarithm-transformed stochastic process and the assumption on $o_i=c_{i-1}$, gives 
	 \begin{align*}
 &\hspace{-0.5cm}f_{\mu,\sigma^2}(u_i,l_i,c_i,o_i\,|\,u_{i-1},l_{i-1},c_{i-1},o_{i-1},\cdots,u_{1},l_{1},c_{1},o_{1})\\
= & f_{\mu,\sigma^2}(u_i,l_i,c_i,o_i\,|\,u_{i-1},l_{i-1},c_{i-1},o_{i-1})\\
= & f_{\mu,\sigma^2}(u_i,l_i,c_i\,|\,o_i, u_{i-1},l_{i-1},o_{i-1})\\
=& f_{\mu,\sigma^2}(u_i,l_i,c_i\,|\,o_i). 
\end{align*} 
Finally, through the general form \eqref{general_log-likelihood}, the corresponding likelihood function with one change point is  
\begin{align}
	&\ell (\boldsymbol{\theta}) = \sum_{i=1}^{\tau} \log(f_{\mu_0,\sigma^2_0}(u_i,l_i,c_i\,|\,o_i)) + \sum_{i=\tau+1}^n \log(f_{\mu_1,\sigma^2_1}(u_i,l_i,c_i\,|\,o_i)), 
	\end{align}
with  $l_i \leq c_i,o_i \leq u_i$ for $i=1,\cdots,n$, where $f_{\mu_0,\sigma^2_0}$ and $f_{\mu_1,\sigma^2_1}$ are given by equation \eqref{f_4}. The proof is complete. 	
\end{proof}

\subsection{Maximum Likelihood Estimation}\label{MLE}
Write the maximum likelihood estimators, given $\tau$, as 
\begin{align}
	(\widehat{\mu}_0(\tau), \widehat{\mu}_1(\tau), \widehat{\sigma}_0^{2}(\tau), \widehat{\sigma}_1^{2}(\tau))= \mathop{\mbox{argmax}}\limits_{(\mu_0,\mu_1,\sigma_0,\sigma_1)} \ \ell(\mu_0, \mu_1, \sigma^2_0, \sigma^2_1\,|\,\tau),
\end{align}
where $\ell(\mu_0, \mu_1, \sigma^2_0, \sigma^2_1\,|\,\tau)$ is $\ell (\boldsymbol{\theta})$ conditional on a given $\tau$. Our methodology proceeds in the following three steps:

\noindent \textbf{Step 1}. 
The first order condition gives $\widehat{\mu}_0 $ and $\widehat{\mu}_1$ as below:
\begin{align*}
	\widehat{\mu}_0(\tau) = \frac{1}{\tau} \sum^{\tau}_{i=1} (c_i - o_i)\quad\text{and}\quad
	\widehat{\mu}_1(\tau) = \frac{1}{n - \tau} \sum^{n}_{i = \tau+1} (c_i - o_i). 
\end{align*}

\noindent \textbf{Step 2}. 
$\widehat{\sigma}_0^{2}$ and $ \widehat{\sigma}_1^{2}$ should satisfy
$$ \partial_{\sigma^2_0}\ell \,|\,_{\mu_0=\widehat{\mu}_0,\sigma^2_0=\widehat{\sigma}_0^{2}}=0\quad \text{and}\quad  \partial_{\sigma^2_0}\ell \,|\,_{\mu_1=\widehat{\mu}_1,\sigma^2_1=\widehat{\sigma}_1^{2}}=0,$$ 
where 
\begin{align} 
	\partial_{\sigma^2_0}\ell&=\sum^{\tau}_{i=1}\Bigg\{\frac{\sum_{k=-\infty}^{\infty}\left [ \left(\partial_{\sigma_0^2}g_{\sigma^2_0}^{(1)}\right)  h_{\mu_0,\sigma_0^2}^{(1)}(k,u_i,l_i,c_i\,|\,o_i) +\left(\partial_{\sigma_0^2}h_{\mu_0,\sigma_0^2}^{(1)}\right)g_{\sigma^2_0}^{(1)}(k,u_i,l_i,c_i\,|\,o_i)  \right ]}{  \sum_{k=-\infty}^{\infty}g_{\sigma^2_0}^{(1)} h_{\mu_0,\sigma_0^2}^{(1)}(k,u_i,l_i,c_i\,|\,o_i) -  \sum_{k=-\infty}^{\infty}g_{\sigma^2_0}^{(2)} h_{\mu_0,\sigma_0^2}^{(2)}(k,u_i,l_i,c_i\,|\,o_i)   }  \nonumber\\
	&\hspace{1.5cm}-\frac{\sum_{k=-\infty}^{\infty}\left [ \left({\partial_{\sigma_0^2} }g_{\sigma^2_0}^{(2)}\right)h_{\mu_0,\sigma_0^2}^{(2)}(k,u_i,l_i,c_i\,|\,o_i) +\left(\partial_{\sigma_0^2}h_{\mu_0,\sigma_0^2}^{(2)}\right)g_{\sigma^2_0}^{(2)}(k,u_i,l_i,c_i\,|\,o_i)   \right ]}{ \sum_{k=-\infty}^{\infty}g_{\sigma^2_0}^{(1)} h_{\mu_0,\sigma_0^2}^{(1)} (k,u_i,l_i,c_i\,|\,o_i)-\sum_{k=-\infty}^{\infty}g_{\sigma^2_0}^{(2)} h_{\mu_0,\sigma_0^2}^{(2)}(k,u_i,l_i,c_i\,|\,o_i)  }\Bigg\}\label{eqn_2}\\    \partial_{\sigma^2_1}\ell&=\sum^{n}_{i=\tau+1}\Bigg\{\frac{\sum_{k=-\infty}^{\infty}\left [ \left(\partial_{\sigma_1^2}g_{\sigma^2_1}^{(1)}\right)h_{\mu_1,\sigma_1^2}^{(1)}(k,u_i,l_i,c_i\,|\,o_i) +\left(\partial_{\sigma_1^2}h_{\mu_1,\sigma_1^2}^{(1)}\right)g_{\sigma^2_1}^{(1)}(k,u_i,l_i,c_i\,|\,o_i)  \right ]}{  \sum_{k=-\infty}^{\infty}g_{\sigma^2_1}^{(1)} h_{\mu_1,\sigma_1^2}^{(1)}(k,u_i,l_i,c_i\,|\,o_i)   -   \sum_{k=-\infty}^{\infty}g_{\sigma^2_1}^{(2)} h_{\mu_1,\sigma_1^2}^{(2)} }\nonumber\\
	&\hspace{1.7cm} -\sum_{k=-\infty}^{\infty}\frac{\left [ \left({\partial_{\sigma_1^2} }g_{\sigma^2_1}^{(2)}\right)h_{\mu_1,\sigma_1^2}^{(2)}(k,u_i,l_i,c_i\,|\,o_i) +\left(\partial_{\sigma_1^2}h_{\mu_1,\sigma_1^2}^{(2)}\right)g_{\sigma^2_1}^{(2)}(k,u_i,l_i,c_i\,|\,o_i)   \right ]}{  \sum_{k=-\infty}^{\infty}g_{\sigma^2_1}^{(1)} h_{\mu_1,\sigma_1^2}^{(1)} (k,u_i,l_i,c_i\,|\,o_i)  -   \sum_{k=-\infty}^{\infty}g_{\sigma^2_1}^{(2)} h_{\mu_1,\sigma_1^2}^{(2)}(k,u_i,l_i,c_i\,|\,o_i)  }\Bigg\}.\nonumber
\end{align}

However, the MLE for $(\sigma_0^2,\sigma_1^2)$ cannot be solved explicitly, so we resort to the NR algorithm. The NR algorithm is an iterative numerical method used to find approximate solutions to real-valued functions, particularly for finding the roots (or zeroes) of a function by linearizing it around an initial guess. Since sensitivity to the initial value is a significant issue in the NR algorithm, several different initial values are tested as suggested by \cite{Knight2000}. Additionally, given that $\sigma_0$ and $\sigma_1$ are constrained parameters, we apply the log-transformation $$\zeta_0=\log(\sigma_0)\qquad\text{and}\qquad\zeta_1=\log(\sigma_1)$$  as recommended by \cite{Mac2014}, so that $\zeta_0,\zeta_1\in (-\infty,\infty)$. The profile log-likelihood function with a given $\tau$, after the transformation, is written as:
\begin{align*}
	\widetilde{\ell}(\mu_0, \mu_1, \zeta_0, \zeta_1\,|\,\tau) = \ell(\mu_0, \mu_1, \sigma^2_0, \sigma^2_1\,|\,\tau).
\end{align*}
Denote the profile MLE of the transformed parameters as
\begin{align*}
	(\widehat{\mu}_0(\tau), \widehat{\mu}_1(\tau), \widehat{\zeta}_0(\tau), \widehat{\zeta}_1(\tau)) = \mathop{\mbox{argmax}}\limits_{\mu_0,\mu_1,\zeta_0,\zeta_1 \in (-\infty, \infty)^4} \ \widetilde{\ell}(\mu_0(\tau),\mu_1(\tau), \zeta_0(\tau), \zeta_1(\tau)\,|\,\tau).
\end{align*}
Given $\tau$, ${\widehat\mu}_0$ and ${\widehat\mu}_1$, the pseudo-algorithm for the NR method to find $\widehat{\zeta}_0$ and $\widehat{\zeta}_1$, consequently $\widehat{\sigma}_0^{2}$ and $\widehat{\sigma}_1^{2}$, is provided in Algorithm \ref{NR-algo}. 
\begin{algorithm}[t!]
	\caption{NR algorithm in this setting}\label{NR-algo}
	\vspace{0.1cm}
	\textbf{Input:} The threshold value $\bm{\epsilon}=(\epsilon_{\zeta_{0}},\epsilon_{\zeta_{1}})$ and the initial value $\bm{\Lambda}^{(0)}=({\zeta}_0^{(0)},{\zeta}_1^{(0)}) $\\
	\textbf{Iterate:} For $k=0,1,....$
	\[
	\begin{bmatrix}
		\zeta_0^{(k+1)} \\ \zeta_1^{(k+1)}  \\
	\end{bmatrix}
	=
	\begin{bmatrix}
		\zeta_0^{(k)} \\ \zeta_1^{(k)}  \\
	\end{bmatrix}
	-
	[H(\bm{\Lambda}^{(k)})]^{-1}
	G(\bm{\Lambda}^{(k)})    	
	,       	
	\] 
	\hspace{1.5cm} where
	\begin{align*}
		H(\bm{\Lambda}^{(k)})
		= -\begin{bmatrix}
			\frac{\partial^2 \widetilde{\ell}}{\partial \zeta_0^{(k)} \partial \zeta_0^{(k)}} &
			\frac{\partial^2 \widetilde{\ell}}{\partial \zeta_0^{(k)} \partial \zeta_1^{(k)}}\vspace{0.1cm} \\
			\frac{\partial^2 \widetilde{\ell}}{\partial \zeta_1^{(k)} \partial \zeta_0^{(k)}} &
			\frac{\partial^2 \widetilde{\ell}}{\partial \zeta_1^{(k)} \partial \zeta_1^{(k)}} 
		\end{bmatrix}
		\quad
		\text{and}
		\quad
		G(\bm{\Lambda}^{(k)})=
		\begin{bmatrix}
			\partial_{\zeta_0^{(k)}} \widetilde{\ell},\, \partial_{\zeta_1^{(k)}} \widetilde{\ell}
		\end{bmatrix}^{T}.
	\end{align*}
	
 \hspace{1.5cm} Stop, if $|\bm{\Lambda}^{(k+1)}-\bm{\Lambda}^{(k)}|< \bm{\epsilon} $.\\
	\textbf{Output:} $\widehat{\bm{\Lambda}}={\bm{\Lambda}}^{(k+1)}$\\
\end{algorithm}

\noindent \textbf{Step 3}. 
The change-point that maximizes the profile log-likelihood is obtained by
\begin{align*}
	\widehat{\tau} = \mathop{\mbox{argmax}}\limits_{\tau \in \{3,4,\ldots,n-3\}}\widetilde{\ell}(\widehat{\mu}_0(\tau), \widehat{\mu}_1(\tau), \widehat{\zeta}_0(\tau), \widehat{\zeta}_1 (\tau),\tau ).
\end{align*}
At last, we obtain the profile MLE
$$\boldsymbol{\widehat\theta} = (\widehat\mu_0,\widehat\mu_1,\widehat \zeta_0,\widehat \zeta_1,\widehat\tau):=(\widehat{\mu}_0(\widehat\tau), \widehat{\mu}_1(\widehat\tau), \widehat{\zeta}_0(\widehat\tau), \widehat{\zeta}_1(\widehat\tau),\widehat\tau).$$

\subsection{Confidence interval construction}
\label{sec:CI}
We propose a new methodology for constructing confidence intervals (CIs), which is based on the parametric bootstrap method \citep{ET1993}, as follows.

\noindent \textbf{Step 1}. Obtain the MLE $(\widehat{\tau},\widehat{\mu}_0 , \widehat{\mu}_1, \widehat{\sigma}_0^{2} , \widehat{\sigma}_1^{2})$, where $\widehat{\sigma}_0=\exp(\widehat \zeta_0)$ and $\widehat{\sigma}_1=\exp(\widehat \zeta_1)$.

\noindent \textbf{Step 2}. Given the initial value $o_1$, generate the bootstrap samples $\Vec{o}(b)=(o_2(b),\cdots,o_n(b))$, $\Vec{u}(b)=(u_1(b),\cdots,u_n(b))$, $\Vec{l}=(l_1(b),\cdots,l_n(b))$, and $\Vec{c}(b)=(c_1(b),\cdots,c_n(b))$ using the model given by equation \eqref{eq:model.first} with the parameters $(\widehat{\mu}_0,\widehat{\sigma}^2_0)$ before $\widehat\tau$ and  $(\widehat{\mu}_1,\widehat{\sigma}^2_1)$ after $\widehat\tau$ respectively. 

\noindent \textbf{Step 3}. Obtain the MLE $({\widehat{\tau}}(b),{\widehat{\mu}}_0(b) , {\widehat{\mu}}_1(b), {\widehat{\sigma}_0^{2}(b)} , {\widehat{\sigma}_1^{2}(b)} )$.

\noindent \textbf{Step 4}. Repeat Step 2 and Step 3 $B$ times and obtain the set of the bootstrap MLEs $({\widehat{\tau}}(b),{\widehat{\mu}}(b)_0 , {\widehat{\mu}}_1(b), {\widehat{\sigma}_0^{2}(b)}  , {\widehat{\sigma}_1^{2}(b)} )$ for $b=1,\cdots,B$, where $B$ is the number of bootstrap replications. 

\noindent \textbf{Step 5}. 	Define the CI for $\mu_0$ as the 
$[\alpha B/2]$-th 
value and the $[B(1-\alpha/2)]$-th value in the ordered values of  
${\widehat{\mu}}_0(b)$ for $b=1,\cdots,B$. Similarly, define the CIs for  ${\mu}_1$, $\sigma_0^2$, and $\sigma_1^2$.

\noindent \textbf{Step 6}. Referred to \cite{TP1997}, define the CI for $\tau$ as the highest density (frequency) of the bootstrap change point estimates as follows:
\begin{enumerate}[(6.1)]
	\item Denote the set $\{c_1,\cdots,c_{\widetilde{N}_B}\}$ as the distinct values of $\{{\widehat{\tau}}(b):b=1,\cdots,B\}$, where $\widetilde{N}_B\leq B$.
	\item Let $\{c_{(1)},\cdots,c_{(\widetilde{N}_B)}\}$ be the order values of the set $\{c_1,\cdots,c_{\widetilde{N}_B}\}$. The confidence set is the minimal subset $ \{c_{(1)},\cdots,c_{(\widetilde{N}_{\tau})}\} \in\{c_{(1)},\cdots,c_{(\widetilde{N}_B)}\}$ such that 
	\[
	\sum_{b=1}^B\frac{{\mathbb{I}}_{ \left\{\widehat{\tau}(b) \in \{c_{(1)},\cdots,c_{(\widetilde{N}_{\tau})}\}\right\}  }}{B}\geq (1-\alpha). 
	\]   
\end{enumerate}

\section{Numerical analysis}\label{SS}
In this section, we conduct extensive simulations to evaluate the performance of the proposed method. Section \ref{sec:Description} provides a description of the data. Structural changes based on $\sigma^2$ are discussed in Section \ref{sec:Structure_change_sigma}, and structural changes based on $\mu$ are presented in Section \ref{sec:Structure_change_mu}.

\subsection{Description}
\label{sec:Description}
We set the terminal time as \( n = 250 \), representing approximately one year of daily (financial) data. To ensure statistical robustness, we examined various change point scenarios \( \tau = n/2, n/3, n/5, \) and \( n/10 \), conducting \( R = 1000 \) replications for each scenario. The threshold values for the NR algorithm were set to \( 10^{-6} \) across all cases. The change point $\tau$ and the model parameters $(\mu_0,\mu_1,\sigma_0,\sigma_1)$ were estimated using observations $X_i=(O_i,U_i,L_i,C_i)$ for $i\in \{1,\ldots,n\}$. We compare the proposed approach applied to the OULC model with the traditional method applied to the OC model, which estimates parameters using only the opening and closing values. Since the OC model is a simpler special case of the OULC model, referred to \cite{LS2019}, the log-likelihood function with one change point is given by
\begin{align*}
	\ell(\boldsymbol{\theta} \,|\, \Vec{o}, \Vec{c} \,) 
	= &-\frac{\tau}{2} \log(2 \pi \sigma_0^2) - \frac{1}{2 \sigma_0^2} \sum^{\tau}_{i = 1} (c_i - o_i - \mu_0 )^2 \\
	&-\frac{n - \tau}{2} \log(2 \pi \sigma_1^2) - \frac{1}{2 \sigma_1^2} \sum^{n}_{i = \tau+ 1} (c_i - o_i - \mu_1 )^2, 
\end{align*}
and the corresponding profile likelihood estimators are
\[
\widehat{\mu}_0(\tau) = \frac{1}{\tau} \sum^{\tau}_{i = 1} (c_i - o_i),\qquad  \widehat{\mu}_1(\tau) = \frac{1}{n - \tau} \sum^{n}_{i = \tau+1} (c_i - o_i),
\] 
\[
\widehat{\sigma}_0^2(\tau) = \frac{1}{\tau} \sum^{\tau}_{i = 1} (c_i - o_i - \widehat{\mu}_0)^2\quad \text{and}\quad \widehat{\sigma}_1^2(\tau) = \frac{1}{n - \tau} \sum^{n}_{i = \tau + 1} (c_i - o_i - \widehat{\mu}_1)^2. 
\]
To assess the performance of the estimators, we use the root mean square error (RMSE) and root error (RE), defined respectively as
\[
{\text{RMSE}}(\widehat\theta)=\sqrt{\frac{1}{R}\sum_{i=1}^R(\widehat\theta_i-\theta)^2}
\qquad \text{and}\qquad 
{\text{RE}}(\widehat\theta)=\frac{{\text{RMSE}}}{\text{True value}},
\]
where $\theta= (\mu_0,\mu_1,\sigma_0,\sigma_1,\tau)$.
See \cite{HC1996} for instance. 

\subsection{Structure change based on $\sigma^2$}
\label{sec:Structure_change_sigma}
We examined the model performance under structural changes in $\sigma^2$. Specifically, we firstly considered a change-point scenario with $\mu_0 = \mu_1 = 0.0008$, $\sigma_0^2 = 0.000169$, and $\sigma_1^2 = 0.000784$. We analyzed various change-point locations, $\tau = 25, 50, 83, 125$, using $\mathcal{R} = 1000$ replications of each experiment. Figure \ref{s04r} provides a visualization of the cases where the change-point occurs at $\tau = 25$ and $\tau = 125$. In the Supplement, we explore another scenario with $\mu_0 = \mu_1 = 0.0008$, $\sigma_0^2 = 0.000169$, and $\sigma_1^2 = 0.000676$, considering various change-point locations and 1000 replications for each experiment, yielding very similar results.
 Detailed numerical results for these two scenarios are provided in  the Supplement.

\begin{figure}[H]    
	\centering   
	\includegraphics[width=8cm,height=4cm]{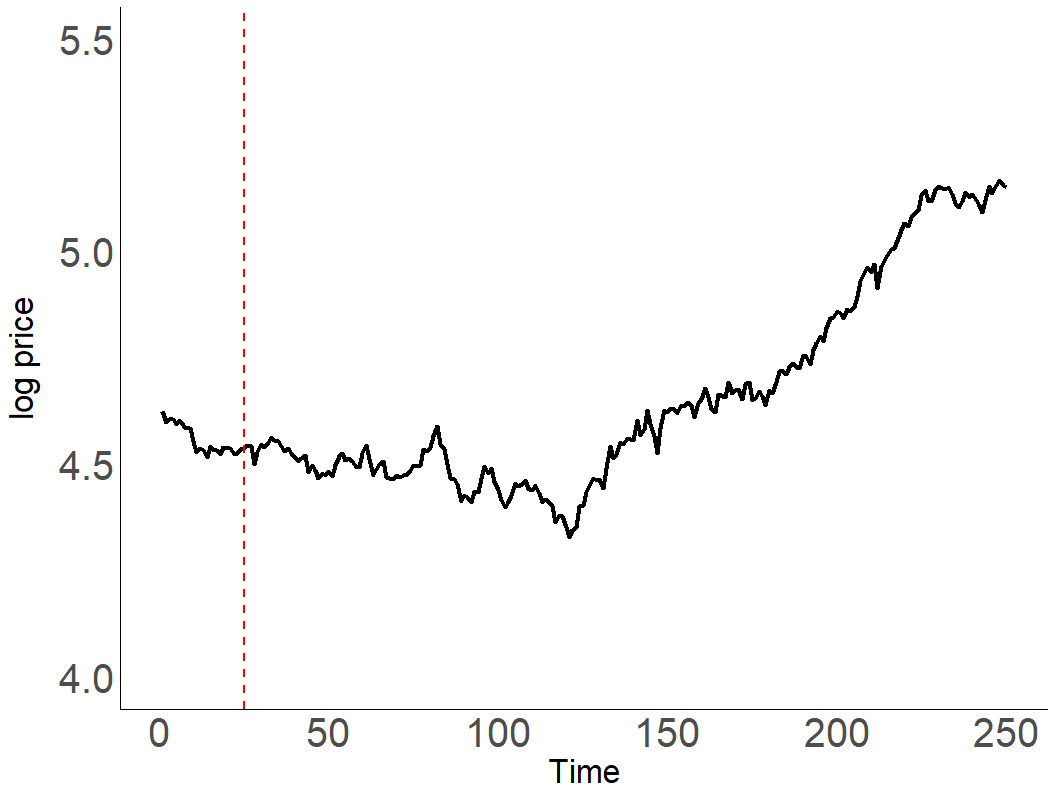}
	\includegraphics[width=8cm,height=4cm]{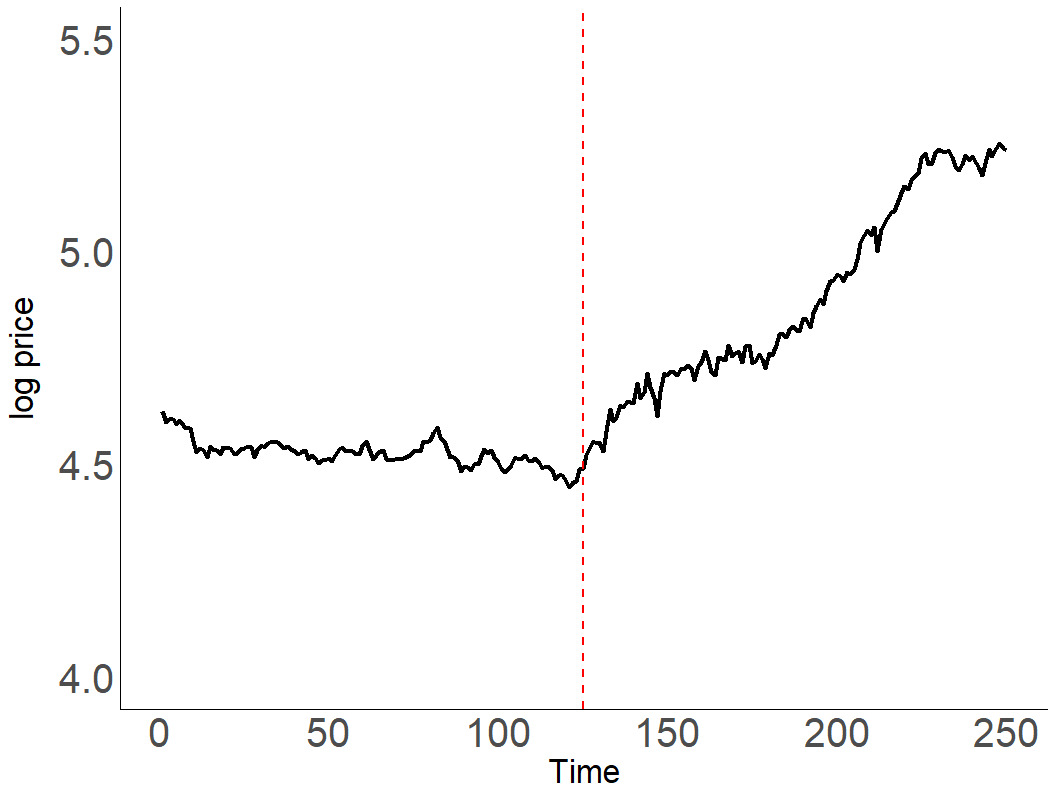}
	\caption{
		An illustration of two distinct change points, $\tau = 25$ (left) and $\tau = 125$ (right), indicated by the dashed lines, under the model parameters $\mu_0 = \mu_1 = 0.0008$, $\sigma_0^2 = 0.000169$, and $\sigma_1^2 = 0.000784$.}
	\label{s04r}
\end{figure}

Here, we report results for the first scenario involving structural changes in $\sigma^2$. Figure \ref{fig:sigma784} shows the RE and RMSE for all parameter estimates $(\widehat{\mu}_0, \widehat{\mu}_1, \widehat{\sigma}_0^2, \widehat{\sigma}_1^2, \widehat{\tau})$. We observe that the proposed OULC model consistently outperformed the traditional OC model, yielding both lower RE and RMSE across all change-point scenarios. In particular, in the challenging case where the change point occurs early at $\tau = 25$, the OULC model demonstrates a significant advantage. This advantage diminishes for the estimates of $(\widehat{\mu}_0, \widehat{\mu}_1, \widehat{\tau})$ as the change point moves to later stages. However, the advantage holds for $(\widehat{\sigma}_0^2, \widehat{\sigma}_1^2)$ at all change-point locations. Specifically, in the change point estimation of $\tau = 25$, the MLE, RMSE and RE in the traditional OC model are 29.704, 18.800053 and 0.752002, respectively, whereas in the OULC model, they are 25.034, 0.475395, and 0.019016, respectively.

Figure \ref{fig:sigma784} displays the averaged errors of individual estimations in all 1000 replications. Figure \ref{fig:B_sigma784} shows the overall mean estimates of the two models compared to the true values. Across all experiments, the OC model exhibits greater uncertainty in all parameter estimates.

\begin{figure}[htbp!]
	\centering
	\includegraphics[width=17cm,height=17cm]{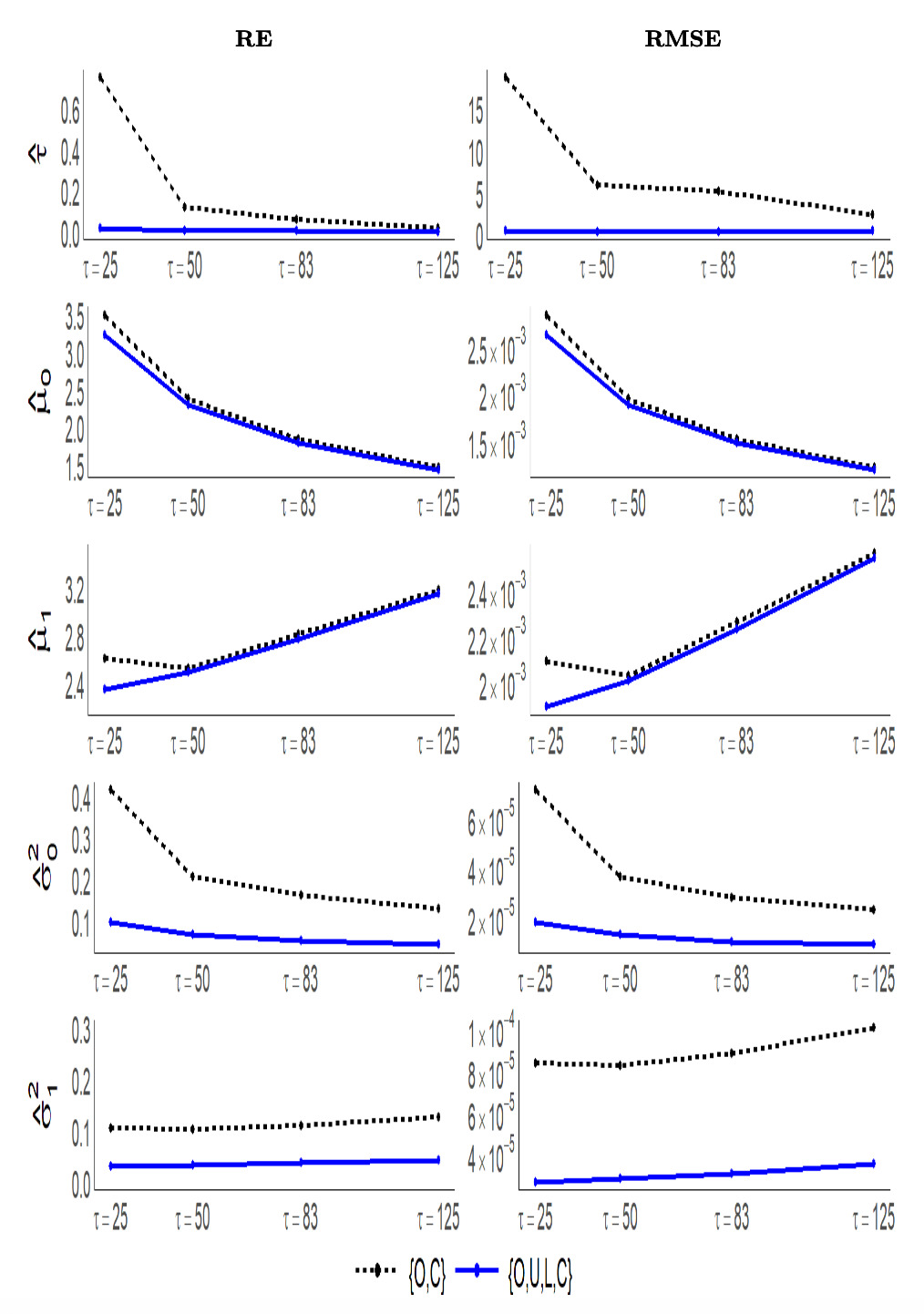}
	\caption{Performance analysis of the OC model and the OULC model in terms of RE and RMSE for the case $\mu_0 = \mu_1 = 0.0008$, $\sigma_0^2 = 0.000169$, and $\sigma_1^2 = 0.000784$, across varied $\tau$ values with 1000 replications.
	 }
	\label{fig:sigma784}
\end{figure}
\begin{figure}[htbp!]
	\centering
	\includegraphics[width=17cm,height=17cm]{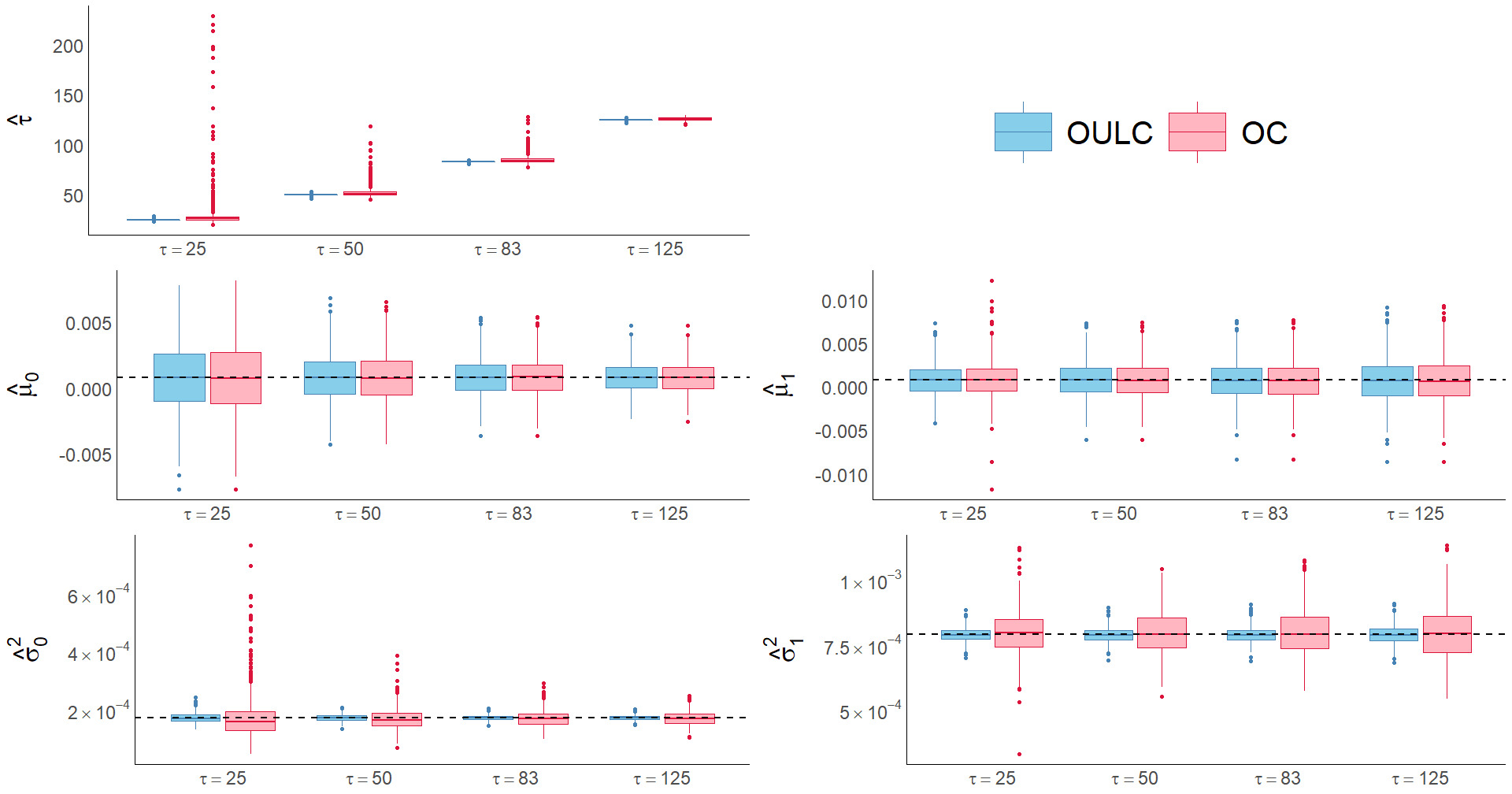}
	\caption{Boxplot of parameter estimations for the OC model and the OULC model for the case $\mu_0 = \mu_1 = 0.0008$, $\sigma_0^2 = 0.000169$, and $\sigma_1^2 = 0.000784$, across varied $\tau$ values with 1000 replications.
	}
	\label{fig:B_sigma784}
\end{figure}

%

\subsection{Structure change based on $\mu$}
\label{sec:Structure_change_mu}

\begin{figure}[H]    
	\centering   
	\includegraphics[width=8cm,height=4cm ]{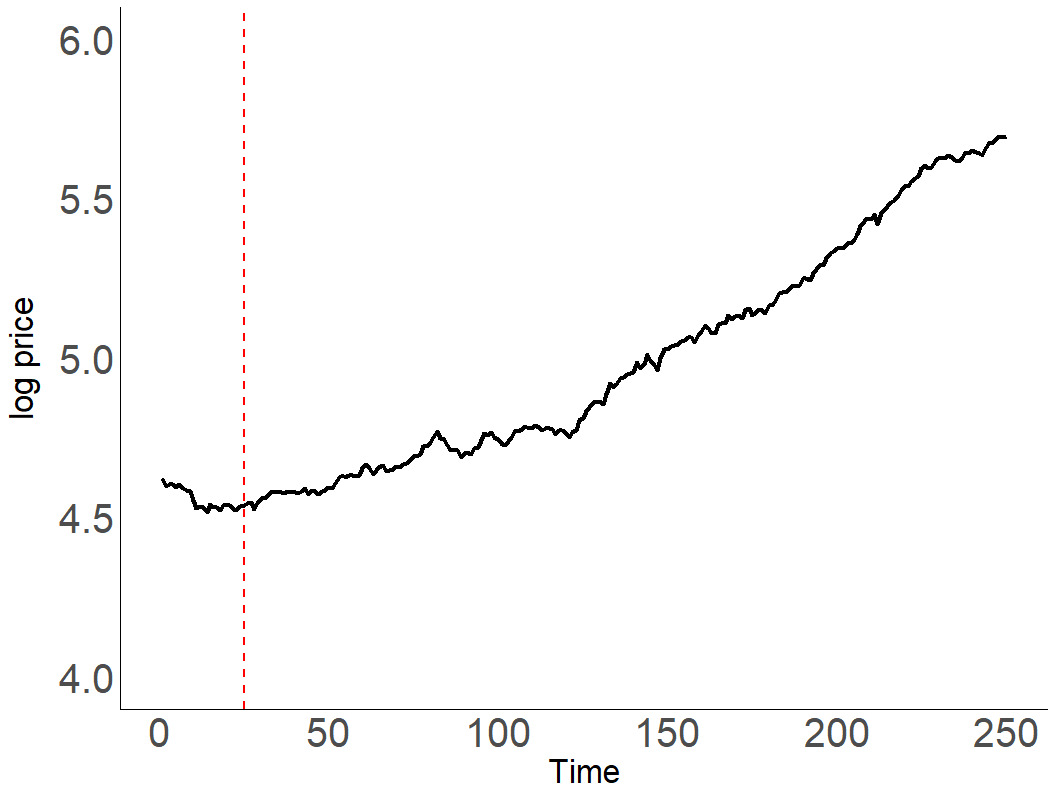}
	\includegraphics[width=8cm,height=4cm]{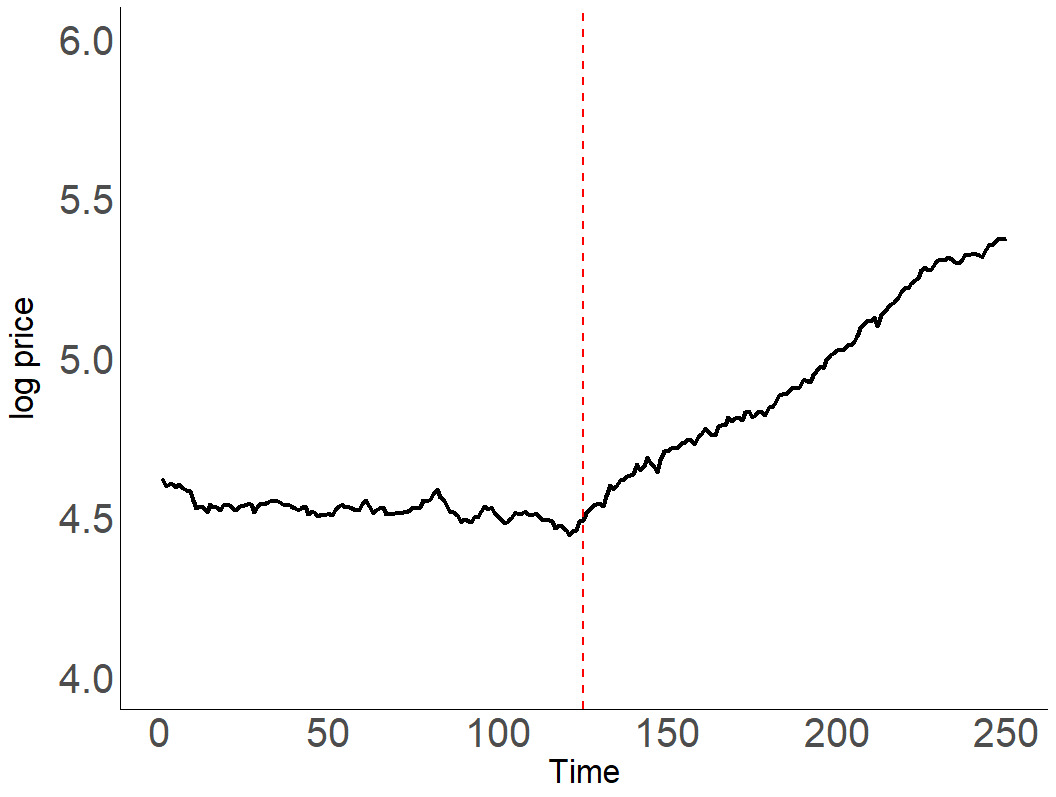}
	\caption{An illustration of two distinct change points, $\tau = 25$ (left) and $\tau = 125$ (right), indicated by the dashed lines, under the model parameters $\mu_0=0.0008$ and $\mu_1=0.004$ and $\sigma_0^2=\sigma_1^2=0.000169$.}
	\label{mu04r}
\end{figure}

\begin{figure}[htbp!]
	\centering
	\includegraphics[width=17cm,height=17cm]{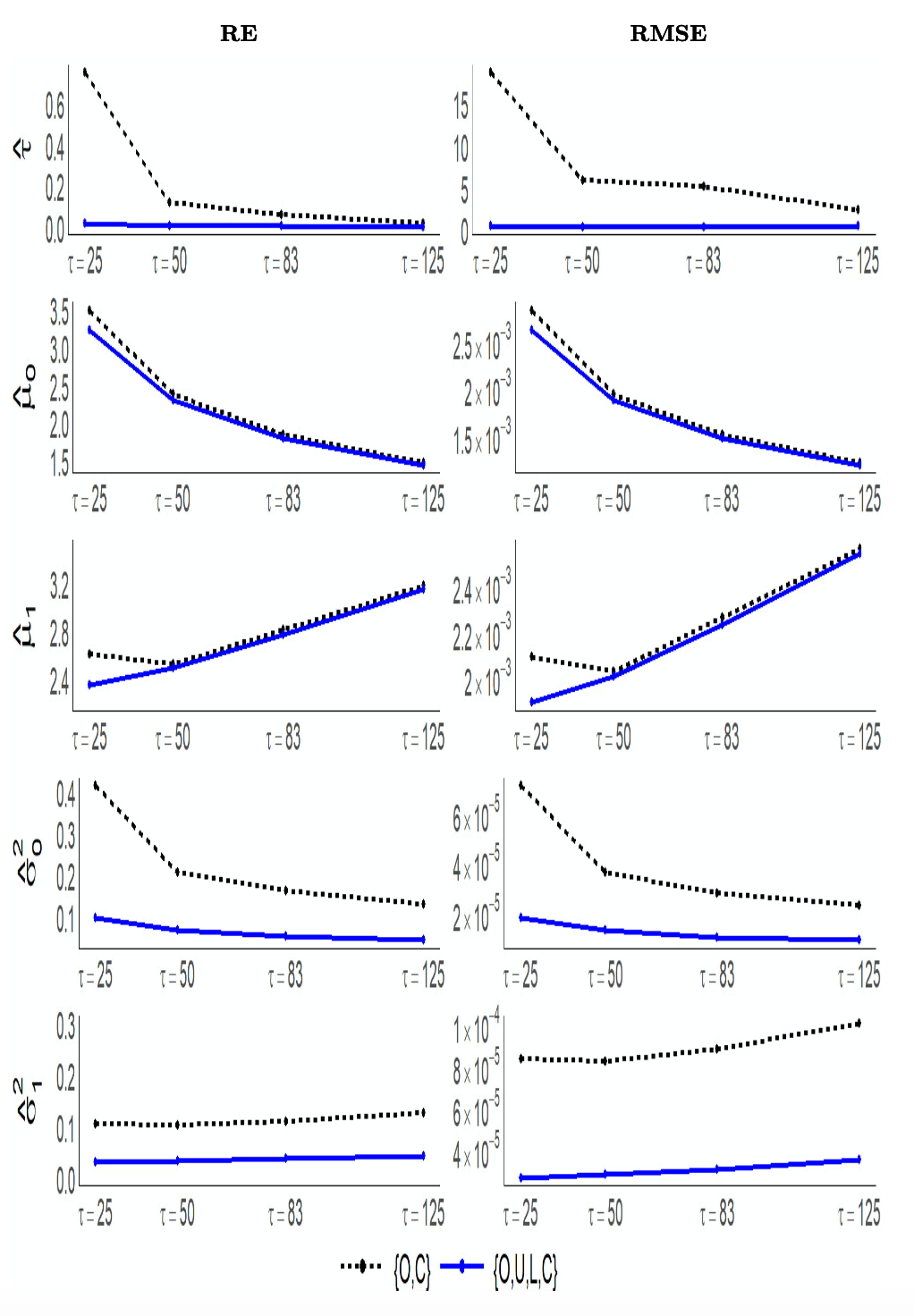}
	\caption{Performance analysis of the OC model and the OULC model in terms of RE and RMSE for the case $\mu_0=0.0008$ and $\mu_1=0.004$, $\sigma_0^2=\sigma_1^2=0.000169$, across varied $\tau$ values with 1000 replications.}
	\label{fig:mu5}
\end{figure}

\begin{figure}[htbp!]
	\centering
	\includegraphics[width=17cm,height=17cm]{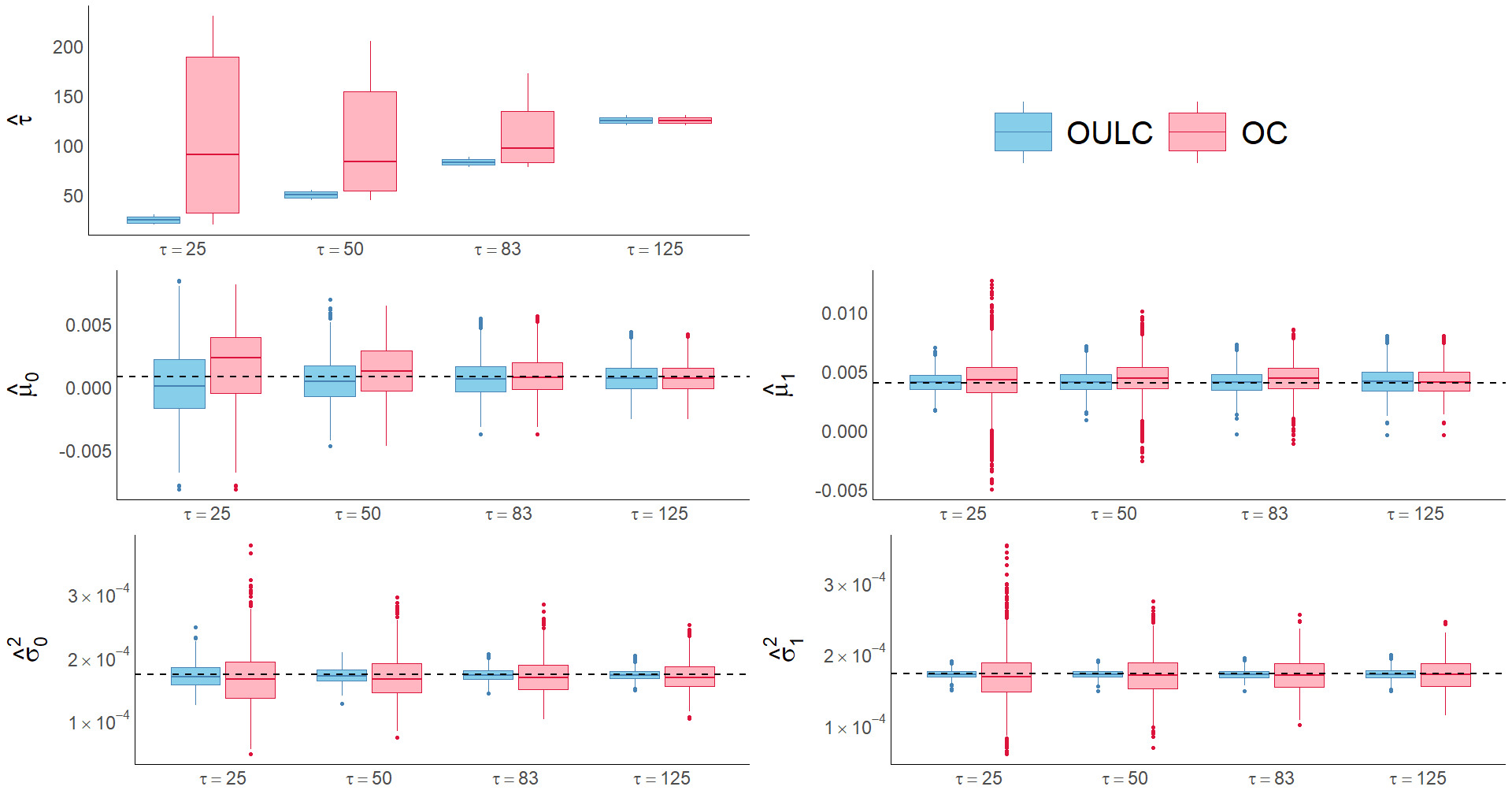}
	\caption{
		Boxplot of parameter estimations for the OC model and the OULC model for the case $\mu_0=0.0008$ and $\mu_1=0.004$, $\sigma_0^2=\sigma_1^2=0.000169$, across varied $\tau$ values with 1000 replications.}
	\label{fig:B_mu5}
\end{figure}
We assessed the model performance in the presence of structural changes in $\mu$. Initially, we focused on a change-point scenario defined by $\mu_0 = 0.0008$, $\mu_1 = 0.004$, and $\sigma_0^2 = \sigma_1^2 = 0.000169$. Various change-point locations, specifically $\tau = 25, 50, 83, 125$, were analyzed using $\mathcal{R} = 1000$ replications for each experiment. A visualization of the change-point occurrences at $\tau = 25$ and $\tau = 125$ is presented in Figure \ref{mu04r}. Detailed numerical results for both scenarios can be found in  the Supplement.

We present the results for the first scenario characterized by structural changes in $\mu$. Figure \ref{fig:mu5} illustrates the RE and RMSE for all parameter estimates $(\widehat{\mu}_0, \widehat{\mu}_1, \widehat{\sigma}_0^2, \widehat{\sigma}_1^2, \widehat{\tau})$. The findings reveal that the proposed OULC model consistently outperforms the traditional OC model, exhibiting lower RE and RMSE across all change-point scenarios.  Similar to the previous case in Figure \ref{fig:sigma784}, the OC model struggles to estimate $(\widehat{\sigma}_0^2, \widehat{\sigma}_1^2)$ despite these parameters remaining unchanged in this scenario. In particular, in the change point estimation of $\tau = 25$, the traditional OC model yields an estimate $\widehat{\tau} = 109.063$ with RMSE 113.959620 and RE 4.558385; in contrast, the proposed OULC model provides a much more accurate estimate $\widehat{\tau} = 24.876$ with RMSE 3.532138 and 0.141286 .

Figure \ref{fig:B_mu5} shows the overall mean estimates of the two models compared to the true values. Across all experiments, the OC model exhibits greater uncertainty in all parameter estimates. Unlike Figure \ref{fig:B_sigma784}, where the mean of the estimates generated by the OC model was close to the true value, in this case study the OC model's estimates for the change point $\widehat{\tau}$ deviate significantly from the true value and display large variations. In contrast, the OULC model consistently provides stable and accurate estimates across all experiments.


	\section{Empirical Data analysis}\label{sec:ES}
In the empirical analysis of the S\&P 500 index during the Russo-Ukrainian War in 2022, the proposed OULC model demonstrates superior performance compared to the traditional OC model. The datasets, acquired from Yahoo Finance (\url{finance.yahoo.com}), cover the daily log prices from December 31, 2021, to May 20, 2022. Notably, from Table \ref{2022}, we can see that the OULC model yields a much smaller Akaike Information Criterion (AIC) value of -2207.14, compared to -565.36 for the OC model, indicating a much better fit to the data. Furthermore, the confidence interval for the change point $\tau$ identified by the OULC model is narrower, suggesting increased precision in the estimation of the change point, which is determined to be April 19, 2022. In contrast, the OC model presents a much wider 95\% confidence interval for $\tau$ at 75 being (4, 93), highlighting the OULC model's enhanced reliability in capturing structural changes. These findings illustrate that the OULC model not only provides more accurate parameter estimates but also offers a more convincing representation of the dynamics surrounding significant financial events, as visualized by the real data plot in Figure \ref{RUwar}. 
\begin{table}[H]
	\centering
	\small
	\begin{tabular}{@{}lll@{}}
		\toprule
		Model & \{O,U,L,C\} & \{O, C\} \\ \midrule
		
		$\tau$ - Date & 2022-04-19 & 2022-04-20 \\ 
		$\tau$ - 95\%CI  & 74  (70, 79) & 75  (4, 93) \\ 
		$\widehat{\mu}_0$ - 95\%CI  & -0.0008136 (-0.0030776, 0.0015171) & -0.0006172 (-0.0117505, 0.0085377) \\ 
		$\widehat{\mu}_1$ - 95\%CI  &  -0.0044948  (-0.0096793, 0.0008978) & -0.0055563 (-0.0169120, 0.0111408) \\ 
		$\widehat{\sigma}^2_0$ - 95\%CI  & 0.0001069 (0.0000958, 0.0001170) & 0.0001413 (0.0000080, 0.0002002) \\ 
		$\widehat{\sigma}^2_1$ - 95\%CI  & 0.0001956 (0.0001605, 0.0002347) & 0.0002915 (0.0000182, 0.0005057) \\ 
		AIC  & -2207.14 & -565.36\\
		\bottomrule
	\end{tabular}
	\caption{Parameter estimates and model performance metrics for the OULC and OC models based on the S\&P 500 index during the Russo-Ukrainian War in 2022.}
	\label{2022}
\end{table}

\begin{figure}[H]  
	\centering
	\includegraphics[width=0.9\textwidth]{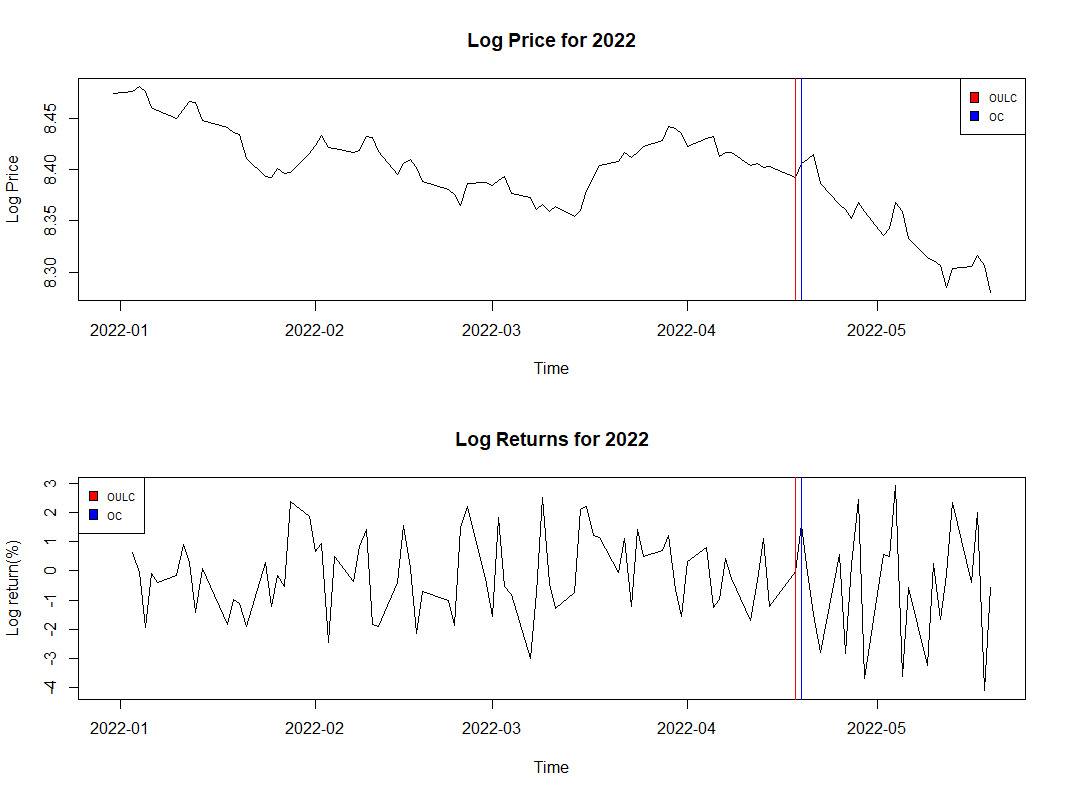}
	\caption{Change-point analysis for the S\&P 500 index during the Russo-Ukrainian War in 2022.}
	\label{RUwar}
\end{figure}

	\section{Conclusions}\label{CON}
In this paper, we have introduced a novel approach for change-point estimation in interval-based time series, leveraging the GBM model alongside the Girsanov theorem to capture multivariate time series data, including daily maximum, minimum, opening, and closing prices. The estimation framework utilizes MLE in conjunction with the NR algorithm, delivering robust performance in both simulated and real-world scenarios. Our simulation studies demonstrate that the proposed method achieves high accuracy in estimating change-points under varying parameter settings for mean ($\mu$) and variance ($\sigma^2$). Comparisons with a model based solely on closing prices confirm that our method outperforms it in terms of RMSE and RE. In empirical applications, our approach successfully detects critical change-points associated with the Russo-Ukrainian War in 2022.


	\section*{Funding} 
The research of L.S. was supported by NSTC grant 112-2628-M-008-002-MY3. The research of N.N. was partially supported by NIH grant 1R21AI180492-01 and the Individual Research Grant at Texas A\&M University.

\bibliography{sn-offline-GBM}
\end{document}